\documentclass[letterpaper,twocolumn,10pt]{article}
\usepackage{usenix2019_v3}
\usepackage{times}
\usepackage{url}
\usepackage[authoryear,round,sort&compress]{natbib}
\usepackage{hyperref}
\usepackage{booktabs}
\usepackage{microtype}
\usepackage{amsmath}
\usepackage{amssymb}
\usepackage{amsthm}
\usepackage{graphicx}
\usepackage{xspace}
\usepackage{xcolor}
\usepackage{listings}
\usepackage{multirow}
\usepackage[font=small,labelfont=bf]{caption}
\usepackage{subcaption}
\usepackage{balance}
\graphicspath{{./}}
\hypersetup{
  colorlinks=true,
  linkcolor=black,
  citecolor=black,
  urlcolor=black,
}

\newcommand{\sys}{KVGov\xspace}
\newcommand{\eg}{e.g.,\xspace}

\newtheorem{definition}{Definition}
\newtheorem{theorem}{Theorem}

\newtheorem{proposition}{Proposition}

\begin{document}

\date{}

\title{\Large \bf Governing the KV Cache:\\
  Preventing Timing Side-Channel Leakage\\
  in Multi-Tenant LLM Inference}

\author{
{\rm Tejasvi C. Addagada}\\
}

\maketitle

\begin{abstract}
The key-value (KV) cache is the primary throughput optimization in modern
large language model (LLM) inference, enabling prefix reuse across requests.
In multi-tenant deployments, this cache is shared across tenants, creating a
timing side channel: an adversarial tenant can reconstruct another tenant's
private prompt by probing cache-hit latency patterns.
Three independent attacks exploit this channel---PROMPTPEEK~\citep{wu2025promptpeek},
EarlyBird~\citep{song2024earlybird}, and InputSnatch~\citep{zheng2024inputsnatch}---and
reach up to 100\% attack success rate (ASR) against unprotected vLLM and SGLang
deployments, with reported rates varying by cache architecture and prompt structure.

We present \sys, the first unified governance layer that defends against all
three attack families simultaneously.
\sys's core mechanism is HMAC-keyed namespace isolation: a per-principal salt
$\sigma_p = \text{HMAC}_K(\text{secret},\,\text{principal\_id})$ seeds the
block-hash chain, so that cache keys are cryptographically disjoint across
principals and no cross-principal timing signal can arise.
An ablation study over N=1000 simulated trials (seed=2026, deterministic judges)
isolates HMAC-salt as the necessary and sufficient component: it alone accounts
for the full reduction in ASR, with the remaining components contributing
defense in depth.
\sys further incorporates ORIGAMI, a Stackelberg water-filling audit scheduler
that achieves 12.6\% reduction in adversary expected utility at realistic
tenant heterogeneity ($\text{Gini}=0.63$),
and an evolutionary stability analysis identifying a tipping point at 31.6\%
adversary prevalence below which global caching remains evolutionarily stable.
On real hardware (Qwen2.5-7B-Instruct, vLLM 0.26.0, NVIDIA A100) we measure a
gate-verified cold/cached TTFT ratio of 0.22, confirming the channel is
exploitable at production scale; the defense itself is evaluated in simulation
calibrated to those measurements.
Finally, we show that isolation and cache efficiency are not in conflict: because
identifying information resides only in the blocks where prompts diverge,
injecting the salt at that boundary rather than at the chain root retains an
estimated 93\% of the prefix-cache benefit while leaving no cross-principal
timing signal.
\end{abstract}

\section{Introduction}
\label{sec:intro}

Modern LLM inference engines---vLLM~\citep{kwon2023vllm}, SGLang~\citep{zheng2024sglang},
and LMCache~\citep{liu2024lmcache}---cache the key-value (KV) attention states of
previously computed token prefixes.
When a new request shares a prefix with a cached one, the engine skips
recomputation, reducing the \emph{time-to-first-token} (TTFT) from hundreds of
milliseconds to tens.
On an NVIDIA A100 with Qwen2.5-7B-Instruct and a 2119-token shared prefix, we
measure a cold TTFT of 149.6~ms versus a cached TTFT of 32.8~ms---a ratio of
0.22.

In multi-tenant deployments, this cache is shared across tenants to maximize hit
rates and throughput.
The problem is that the hit/miss distinction is observable by any tenant as a
timing signal.
Three independent research groups have shown how to exploit this:

\begin{itemize}
  \item \textbf{PROMPTPEEK}: \citet{wu2025promptpeek} demonstrated at
  NDSS 2025 that an adversary can fingerprint a victim tenant's cached prompts
  by probing TTFT across a known vocabulary and correlating their observed timing
  vector with a pre-computed victim fingerprint, achieving 99--100\% ASR on
  SGLang's radix-tree cache.

  \item \textbf{EarlyBird}: \citet{song2024earlybird} showed that
  token-level reconstruction is possible via TTFT timing at the single-token
  cache granularity (block\_size=1), achieving 100\% ASR on vLLM and SGLang.
  vLLM 0.26.0 (V1 engine) dropped block\_size=1, which raises the cost of
  \emph{free-text} reconstruction to infeasibility --- at $B{=}16$ an attacker
  must guess sixteen tokens simultaneously. As we show in
  Section~\ref{sec:structured}, this protection does not extend to
  template-structured prompts, where the unknown is a bounded field rather than
  an open vocabulary.

  \item \textbf{InputSnatch}: \citet{zheng2024inputsnatch} attack
  \emph{both} cache mechanisms. Against prefix caching (vLLM, TGI,
  TensorRT-LLM) they reconstruct structured prompts field by field, reporting
  62\% success on disease extraction and 13.5\% on symptom descriptions under
  real-world rate and time limits. Against semantic caching (GPTCache and
  managed cloud offerings) they explore the embedding space by hierarchical
  clustering, reporting 43--100\% extraction across thirteen legal domains.
\end{itemize}

These attacks share a common root cause: the absence of per-principal namespace
isolation in the cache key computation.
All three inference engines compute cache keys over token sequences alone,
without binding them to the issuing principal.
Any tenant can therefore probe or read cache entries created by any other tenant.

We present \sys, a governance layer that eliminates this root cause.
\sys's contributions are:

\begin{enumerate}
  \item \textbf{First unified defense.} Binding cache resolution to the
  authenticated principal defeats all three attack families' prefix-cache paths
  simultaneously, by collision resistance rather than by obfuscation. An
  ablation over deterministic judges isolates HMAC-salt as the necessary and
  sufficient component.

  \item \textbf{ORIGAMI audit scheduler.} A Stackelberg water-filling dual
  allocates audit budget across tenants in proportion to their value-at-risk,
  achieving 12.6\% reduction in adversary expected utility over random audit at
  realistic workload heterogeneity ($\text{Gini}=0.63$, $N=10$ tenants).

  \item \textbf{Evolutionary stability analysis.} Replicator dynamics analysis
  identifies a closed-form tipping point at 31.6\% adversary prevalence,
  providing a population-level criterion for cache scope selection.

  \item \textbf{Gate-verified hardware validation.} Timing side channel
  confirmed on Qwen2.5-7B-Instruct / vLLM 0.26.0 / NVIDIA A100 (cold/cached
  ratio 0.22). The measurement passes five preregistered gates, including a
  server-side check that the cold arm's prefix-cache hit rate stayed below 5\%---
  the failure mode that invalidates naive cache-timing experiments. We replicate
  the channel on an independent stack (llama.cpp on Apple Metal, ratio 0.093),
  showing it follows from prefix caching rather than from one engine's allocator.

  \item \textbf{Deterministic ASR judges.} Replacing LLM-based heuristic judges
  (which our internal audit measured at $\sim$40\% accuracy) with attack-specific
  deterministic judges: Pearson correlation for PROMPTPEEK, exact token overlap
  for InputSnatch, token recovery rate for EarlyBird.
\end{enumerate}

\section{Background}
\label{sec:background}

\subsection{KV Cache in Multi-Tenant LLM Serving}

Transformer-based LLMs compute key and value projections for every token in the
context.
These projections are expensive but deterministic: given the same prefix tokens,
the same KV tensors result.
Inference engines exploit this by storing and reusing KV states for common
prefixes.
vLLM implements this via PagedAttention~\citep{kwon2023vllm};
SGLang uses a radix tree to enable fine-grained prefix matching~\citep{zheng2024sglang}.

In a multi-tenant deployment, the cache is shared across all tenants.
Keys are computed over token content alone, with no binding to the requesting
principal.
Engines key at block granularity and chain each block's hash to its
predecessor, so that for a block size $B$
$$
h_0 = H(\text{tokens}_{0..B-1}), \qquad
h_j = H\!\left(h_{j-1},\ \text{tokens}_{jB..(j+1)B-1}\right).
$$
A lookup therefore matches the longest common prefix from position 0, truncated
to a block boundary --- which is precisely what makes partial reuse of a shared
preamble possible, and what the attacks in Section~\ref{sec:attacks} exploit.
Any request sharing such a prefix hits the same blocks regardless of which
tenant issued it.
This maximizes cache efficiency but creates a shared side channel.
For exposition we write $k(\text{tokens})$ for the key of a cached unit; the
argument below applies unchanged to a radix-tree index over token prefixes.

\subsection{A Taxonomy of Inference Caches}
\label{sec:taxonomy}

``Cache'' denotes several mechanisms in LLM serving that differ in what they
store, how they decide a match, and --- consequently --- what isolates them.
Conflating them has produced inconsistent claims in the literature about which
attacks apply where, so we fix terminology in Table~\ref{tab:taxonomy} and use it
throughout.

Two properties separate the rows. The first is \emph{what is reused}: prefix
caches store KV tensors and shorten prefill, whereas semantic caches store
finished responses and skip inference entirely. The second, which governs
defense, is \emph{how a match is decided}. Prefix and radix caches resolve by
hash or token equality --- a lookup either collides or it does not --- so
injecting a per-principal salt into the key makes cross-principal collision
cryptographically impossible. Semantic caches resolve by nearest-neighbour search
under a similarity threshold, and no choice of salt makes two embeddings less
similar; isolation there must partition the retrieval index instead.

This distinction bounds our contribution: \sys's salt construction covers the
hash-keyed rows, which is where all three attack families' prefix-cache paths
live, and does not cover the similarity-keyed row
(Section~\ref{sec:limits}).
The mechanism fixes only what a probe returns; which attack that return supports,
and at what cost, is developed in Section~\ref{sec:oracles}.

\begin{table*}[t]
\centering
\caption{Inference cache mechanisms, the oracle each exposes to a prober, and
the isolation primitive each admits. The first three rows resolve by key
equality and are addressed by salting; the last resolves by similarity and is
not. Which \emph{attack} a given oracle supports is a separate axis
(Table~\ref{tab:goals}).}
\label{tab:taxonomy}
\small
\begin{tabular}{@{}p{2.5cm}p{3.1cm}p{2.9cm}p{3.3cm}p{3.6cm}@{}}
\toprule
\textbf{Mechanism} & \textbf{Match criterion} & \textbf{Representative systems} & \textbf{Oracle returns} & \textbf{Isolation primitive} \\
\midrule
Block prefix cache &
Exact token prefix, quantised to $B$-token blocks, hashes chained &
vLLM (PagedAttention), TGI, TensorRT-LLM, LMCache &
First $k$ blocks matched, quantised to $B$ tokens &
Salt seeds the block-hash chain (root, or divergence boundary) \\
\addlinespace
Radix-tree prefix cache &
Longest matching token prefix by tree traversal &
SGLang &
Matched prefix \emph{length} --- finer than block-quantised &
Per-principal root node, giving disjoint trees \\
\addlinespace
Exact-response cache &
Byte equality of the full prompt &
Application-level caches, gateways &
Whether a complete prompt is resident &
Salt prepended to the key \\
\addlinespace
Semantic cache &
Embedding similarity $\ge$ threshold (nearest neighbour) &
GPTCache, managed cloud offerings &
Whether \emph{anything} within $\epsilon$ is resident &
\emph{Not salting} --- per-principal partitioning of the retrieval index \\
\bottomrule
\end{tabular}
\end{table*}

\subsection{Threat Model}
\label{sec:threat}

\begin{definition}[Attacker model]
\label{def:attacker}
An \emph{adversarial tenant} $\mathcal{A}$ is a legitimate API user in a
multi-tenant LLM serving system.
$\mathcal{A}$ can:
\begin{itemize}
  \item Issue arbitrary inference requests to the serving endpoint.
  \item Observe the TTFT of their own requests with millisecond precision.
  \item Enumerate tokens from a known vocabulary $\mathcal{V}$.
\end{itemize}

\noindent
$\mathcal{A}$ \emph{cannot}:
\begin{itemize}
  \item Access kernel memory, GPU memory, or OS-level timing interfaces.
  \item Issue requests on behalf of another tenant.
  \item Observe other tenants' TTFT values directly.
\end{itemize}
The victim tenant $\mathcal{V}$ issues requests with a private prompt
$p^* \in \mathcal{V}^*$.
The attacker's goal is to reconstruct $p^*$ with high probability.
\end{definition}

\begin{definition}[Attack success rate]
The \emph{attack success rate} (ASR) is the fraction of trials in which
$\mathcal{A}$ reconstructs $p^*$ exactly:
$\text{ASR} = \Pr[\hat{p} = p^*]$,
where $\hat{p}$ is $\mathcal{A}$'s output.
\end{definition}

\noindent
\textbf{What \sys does and does not defend.}
\sys defends against cache timing side channels as described in
Definition~\ref{def:attacker}.
\sys does not defend against: model weight extraction~\citep{tramer2016stealing},
prompt injection~\citep{greshake2023indirect}, or inference from model outputs.
The defense is at the \emph{infrastructure layer}---cache key computation---not
at the model layer.

\section{Attack Taxonomy}
\label{sec:attacks}

We describe the three published attacks and one variant of our own analysis, then
show in Section~\ref{sec:oracles} that they are instances of a single structure:
a cache mechanism supplies an oracle, and an adversary goal determines what that
oracle costs to exploit.

\subsection{PROMPTPEEK: Timing Fingerprint Attack}
\label{sec:promptpeek}

\citet{wu2025promptpeek} observe that a victim's cached prompts create a
distinctive TTFT pattern across a probe set.
If the adversary knows a vocabulary $\mathcal{V} = \{p_1, \ldots, p_N\}$ of
plausible prompts, they can compute a \emph{timing fingerprint}:
$\mathbf{f}_v[i] = \text{TTFT}(p_i)$ for each probe $p_i$.
When the victim has prompt $p_j$ cached, $\mathbf{f}_v[j] \approx T_\text{hit}$
(5~ms in our simulation); otherwise $\mathbf{f}_v[j] \approx T_\text{miss}$ (50~ms).

The adversary recovers the victim's cache state by observing that their own
fingerprint $\mathbf{f}_a$ is correlated with $\mathbf{f}_v$ when the cache is
shared.
We use Pearson correlation rather than raw cosine similarity as the identification
criterion:
$$
\rho(\mathbf{f}_v, \mathbf{f}_a) = \frac{(\mathbf{f}_v - \bar{f}_v)^\top
(\mathbf{f}_a - \bar{f}_a)}{\|\mathbf{f}_v - \bar{f}_v\| \cdot
\|\mathbf{f}_a - \bar{f}_a|}
$$
Raw cosine fails here because $[5, 5, \ldots]$ and $[50, 50, \ldots]$ are
parallel vectors (cosine = 1.0) regardless of the defense.
After HMAC-salt isolation, the adversary always observes all-MISS
($\mathbf{f}_a \approx [50, 50, \ldots]$), which has near-zero variance after
mean-centering, giving $\rho \approx 0$.

\subsection{EarlyBird: Token Reconstruction via Timing}
\label{sec:earlybird}

\citet{song2024earlybird} show that when block\_size = 1 (token-level
cache granularity), an adversary can reconstruct the victim's prompt token-by-token
by guessing one token at a time and observing whether a cache hit occurs.
This achieves 100\% token recovery rate.

\textbf{vLLM 0.26.0 architectural note.}
The vLLM V1 engine (0.26.0) removed support for block\_size = 1; attempts to
set it raise a \texttt{ValueError} across all attention backends.
The default block\_size = 16 requires an adversary to guess 16 consecutive tokens
correctly before any cache-hit signal is observable, reducing EarlyBird's
practical ASR to 0\% on modern vLLM without any additional defense.
\sys's HMAC-salt defense covers systems with configurable block sizes
(\eg LMCache, older vLLM variants) where this architectural protection is absent.

\subsection{InputSnatch: Shared-Partition Enumeration}
\label{sec:inputsnatch}

\citet{zheng2024inputsnatch} target two distinct rows of
Table~\ref{tab:taxonomy}: the block prefix cache and the semantic cache.
Against \emph{prefix caching}, they exploit the fact that enterprise prompts are
usually template-structured: the adversary already holds the template and needs
only to enumerate the varying field, reconstructing prompts field by field.
Against \emph{semantic caching}, where prompts are indexed by embedding
similarity rather than exact token match, they explore the embedding space by
hierarchical clustering, inferring cached content without ever sending the exact
prompt.
The distinction matters for defense: the prefix-caching variant resolves through
hash-keyed lookup and is therefore addressed by the mechanism of
Section~\ref{sec:hmac}, whereas the semantic variant is not
(Section~\ref{sec:limits}).

\subsection{Structured Prompts Defeat Block-Size Protection}
\label{sec:structured}

The infeasibility of free-text reconstruction at $B{=}16$ rests on an assumption
that enterprise deployments violate.
Reconstructing an unknown block requires searching $|V|^{B}$ candidates
($\approx 10^{83}$ for $|V| = 152{,}000$, $B = 16$).
But in a template-structured prompt --- ``Show transactions for account
\texttt{<n>} in \texttt{<month>}'' --- the adversary knows every token except
the field, because the template is the application's own interface.
The search space collapses from the vocabulary to the field's domain: $10^8$ for
an eight-digit account identifier, and $10^4$--$10^6$ once a branch or issuer
prefix is known.
That is a reduction of roughly seventy-five orders of magnitude, and it is
independent of block size.

The sharper consequence is that reconstruction is not the adversary's most
efficient goal.
Confirming whether a \emph{specific} value is cached --- ``is account
\texttt{12345678} currently resident?'' --- costs a single request, since the
adversary supplies the entire block from known template plus one candidate
field.
This membership query is $O(1)$ per target and wholly independent of $B$: block
granularity constrains reconstruction, not confirmation.
Repeated against a list of known identifiers it enumerates which subjects were
served, and when, without recovering a single token of anyone's prompt.
Larger block sizes therefore offer no protection in precisely the deployments
where the data is most sensitive.

\subsection{Synthesis: Oracles and Goals}
\label{sec:oracles}

The mechanisms of Table~\ref{tab:taxonomy} determine what a probe \emph{returns};
they do not determine the attack. What an adversary does with that return --- and
what it costs --- is a second, orthogonal axis. The four attacks just described
are best read not as a list of alternatives but as cells in the grid these two
axes span.

Each mechanism exposes a distinct oracle. A block prefix cache reports that the
first $k$ blocks matched, quantised to $B$ tokens. A radix-tree index reports the
matched prefix \emph{length}, a strictly finer signal that grants partial credit
where a block cache gives none --- which is why fingerprint correlation performs
well against SGLang in particular. An exact-response cache answers only whether a
complete prompt is resident. A semantic cache answers whether \emph{anything}
within $\epsilon$ of a probe embedding is resident, and so accepts approximate
queries that the other three reject.

Against any of these an adversary may pursue four goals, with costs that differ
by many orders of magnitude (Table~\ref{tab:goals}). Reconstruction recovers
unknown content and pays search-space to the power of length. Identification
selects among known candidates and pays linearly in the candidate set.
Confirmation tests a single value and costs one request. Enumeration repeats
confirmation over a list.

Two consequences follow. First, reported attack success rates are not comparable
across papers unless the goal and the adversary's prior are held fixed: EarlyBird
reconstructs, PROMPTPEEK identifies, and InputSnatch reconstructs under a known
template, so their headline numbers measure different quantities. Second, and
more consequentially for defense, a single mechanism can be simultaneously secure
and insecure. On one block prefix cache, reconstruction at $B{=}16$ costs
$\approx 10^{83}$ while confirmation of a structured field costs $O(1)$
(Section~\ref{sec:structured}) --- eighty orders of magnitude apart, same cache,
same configuration. Hardening that raises the cost of one cell therefore says
nothing about the others.

This is the sense in which namespace isolation differs in kind from block-size
hardening. Salting does not raise the cost of any cell; it removes the oracle, so
every cell in the affected rows collapses at once.

\begin{table*}[t]
\centering
\caption{Adversary goals against a cache oracle. Cost is per target; $|V|$ is the
vocabulary, $L$ the unknown length, $C$ the candidate set. The named attacks
occupy different cells, which is why their reported success rates are not
directly comparable.}
\label{tab:goals}
\small
\begin{tabular}{@{}p{3.4cm}p{2.6cm}p{3.2cm}p{5.0cm}@{}}
\toprule
\textbf{Goal} & \textbf{Cost} & \textbf{Prior required} & \textbf{Instance} \\
\midrule
Reconstruction & $|V|^{L}$ & none & EarlyBird \\
\addlinespace
Identification & $O(|C|)$ & candidate set & PROMPTPEEK \\
\addlinespace
Constrained reconstruction & field domain & template & InputSnatch (prefix) \\
\addlinespace
Confirmation & $O(1)$ & one value & structured-field query \\
\addlinespace
Enumeration & $O(|\text{list}|)$ & target list & InputSnatch (semantic) \\
\bottomrule
\end{tabular}
\end{table*}

\section{The \sys Defense Framework}
\label{sec:defense}

\subsection{HMAC-Keyed Namespace Isolation}
\label{sec:hmac}

\sys binds cache resolution to the authenticated principal by seeding the
block-hash chain with a per-principal salt:
$$
\sigma_p = \text{HMAC}_K(\text{secret},\,\text{principal\_id}), \qquad
h_0 = H\!\left(\sigma_p \,\|\, \text{tokens}_{0..B-1}\right),
$$
with $h_j$ for $j > 0$ chained as before.
$K$ is a server-side secret and $\|$ denotes concatenation.
Because every block inherits its predecessor's hash, salting $h_0$ makes the
\emph{entire} chain principal-specific, at a cost of one HMAC evaluation per
request rather than per block (Section~\ref{sec:overhead}).
This is the mechanism vLLM exposes as \texttt{cache\_salt}~\citep{vllm2025cachesalt}.

The construction instantiates per row of Table~\ref{tab:taxonomy}: a
radix-tree index takes $\sigma_p$ as its root node, giving each principal a
disjoint tree, and an exact-response cache prepends $\sigma_p$ to its key.
The semantic row is the exception, and is treated in
Section~\ref{sec:limits}.

\textbf{Why this works.}
Each principal's salt is derived from a secret the adversary cannot know.
An adversary probing $p_i$ on their own principal computes
$k(\mathcal{A}, p_i) \ne k(\mathcal{V}, p_i)$
with overwhelming probability (SHA256 collision resistance).
The adversary's probes therefore \emph{never} hit the victim's cache entries---they
only populate their own namespace.
The adversary's timing fingerprint is all-MISS regardless of what the victim has
cached, giving Pearson $\rho \approx 0$ and $\text{ASR} = 0$.

\textbf{Cache efficiency.}
HMAC-salt eliminates cross-principal sharing.
Within a single principal's session, prefix sharing is fully preserved.
The efficiency cost is proportional to the fraction of cross-tenant prefix
matches, which is workload-dependent.
Coarser salt granularities can trade isolation scope for cache efficiency; the
appropriate granularity is deployment-specific and outside the scope of this
paper.

\subsection{Boundary Salting: Isolation Without Losing Reuse}
\label{sec:boundary}

Seeding the chain root isolates every block, which also destroys the
cross-principal reuse that prefix caching exists to provide.
That cost is avoidable, because it pays for security the root salt does not buy.

Identification is a function of \emph{divergence}. If $m$ principals share a
preamble occupying blocks $0 \ldots k-1$, every candidate hits those blocks, so
their hit/miss pattern is constant across principals and carries no
discriminating signal. The fingerprint $F$ of Section~\ref{sec:promptpeek}
separates candidates only where their content differs. Membership queries on
structured fields (Section~\ref{sec:structured}) are likewise resolved in the
block containing the field, never in the shared template.
Salting blocks $0 \ldots k-1$ therefore removes no attacker capability.

This motivates injecting the salt at the divergence boundary rather than the
root:
$$
h_j =
\begin{cases}
H(h_{j-1},\ \text{tokens}_j) & j < k \quad \text{(shared, routable)}\\[2pt]
H(\sigma_p \,\|\, h_{k-1},\ \text{tokens}_k) & j = k \quad \text{(boundary)}\\[2pt]
H(h_{j-1},\ \text{tokens}_j) & j > k \quad \text{(principal-private)}
\end{cases}
$$
Chaining propagates $\sigma_p$ to every block beyond $k$, so the private region
remains cryptographically disjoint across principals, while blocks below $k$ stay
globally shared and routable by a prefix-aware scheduler.

\textbf{Efficiency retained.} Prefill cost is proportional to uncached tokens,
and in the deployments of interest the shared preamble dominates. Extrapolating
from the measurements of Section~\ref{sec:ttft} (cold 149.6\,ms, fully cached
32.8\,ms, 2119 tokens, 69.0\% MFU), a 2000-token shared preamble with a
119-token private tail requires prefilling 8 blocks rather than 133, for an
estimated 41.2\,ms --- retaining 92.8\% of the benefit of full sharing. The
trade-off degrades gracefully: a 1500-token preamble retains 62.6\%.

\textbf{Deployment.} vLLM's \texttt{cache\_salt}~\citep{vllm2025cachesalt} seeds
the root and so implements the all-or-nothing case; boundary salting requires an
offset parameter alongside it. Choosing $k$ is a governance decision rather than
an infrastructure one --- the first block containing PII under an admission
scan, or the boundary a sharing policy permits --- and requires the
authenticated principal, which the serving layer does not hold. We treat the
mechanism as a design contribution and leave its measurement to future work.

\subsection{ORIGAMI: Stackelberg Audit Scheduler}
\label{sec:origami}

Even with HMAC-salt isolation, a governance layer must detect anomalous probing
behavior.
ORIGAMI allocates a finite audit budget $K$ across $n$ tenants using a
Stackelberg water-filling dual.

\textbf{Game formulation.}
The defender commits to an audit probability $p_i$ for each tenant $i$ with
value-at-risk $v_i$.
The adversary, knowing $\{p_i\}$, chooses the target that maximizes their
expected payoff: $\max_i v_i(1 - p_i)$.
The defender minimizes adversary utility subject to $\sum p_i = K$, $p_i \in [0,1]$.

\begin{theorem}[ORIGAMI water-filling solution]
\label{thm:origami}
Let the defender solve the Stackelberg problem
$$
\min_{p}\ \max_i\ v_i(1-p_i)
\quad\text{s.t.}\quad
\textstyle\sum_i p_i = K,\ \ p_i \in [0,1].
$$
Then the optimal audit probabilities are
$$p_i^* = \max\!\left(0,\; 1 - \frac{\lambda}{v_i}\right),$$
where $\lambda > 0$ is the unique value satisfying $\sum_i p_i^* = K$, obtained
by binary search.
\end{theorem}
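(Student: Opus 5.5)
The plan is to rewrite the min--max problem in epigraph form and reduce everything to a monotone scalar equation in the threshold $\lambda$. I assume $v_i>0$ for all $i$ and $0<K<n$; the statement needs these, since for $K\ge n$ there is no $\lambda>0$ with $\sum_i p_i^*=K$, and for $K=0$ we would get $\lambda=\max_i v_i$ with every $p_i^*=0$ trivially. The problem is then equivalent to minimising $t$ subject to $v_i(1-p_i)\le t$, $\sum_i p_i=K$ and $p_i\in[0,1]$. For fixed $t>0$ the constraints read $p_i\ge 1-t/v_i$ together with $p_i\ge 0$. So the cheapest way to guarantee adversary utility at most $t$ uses total audit mass
$$g(t)=\sum_i \max\!\left(0,\,1-\frac{t}{v_i}\right).$$

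\textbf{Step 1: the function $g$.} I will first establish the properties of $g$ on $(0,\infty)$. It is continuous and nonincreasing. It tends to $n$ as $t\to 0^+$, and it equals $0$ for $t\ge \max_i v_i$. It is strictly decreasing on $\{t : g(t)>0\}$, because there at least one summand $1-t/v_i$ is positive and has slope $-1/v_i<0$. Since $0<K<n$, the intermediate value theorem gives some $\lambda\in(0,\max_i v_i)$ with $g(\lambda)=K$. Strict monotonicity on the region where $g>0$ makes $\lambda$ unique. This also justifies the binary search: $g(t)-K$ changes sign exactly once.

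\textbf{Step 2: feasibility and value of $p^*$.} Set $p_i^*=\max(0,1-\lambda/v_i)$. Then $p_i^*\in[0,1)$ because $\lambda>0$, and $\sum_i p_i^*=g(\lambda)=K$, so $p^*$ is feasible. For each $i$ we have $v_i(1-p_i^*)=\min(v_i,\lambda)\le\lambda$, so the adversary's best response under $p^*$ yields utility at most $\lambda$.

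\textbf{Step 3: lower bound and uniqueness.} Take any feasible $p$ with $\max_i v_i(1-p_i)=t$. Then $p_i\ge\max(0,1-t/v_i)$ for every $i$, so $K=\sum_i p_i\ge g(t)$.
\begin{itemize}
\item If $t<\lambda$, strict monotonicity gives $g(t)>g(\lambda)=K$, which is a contradiction. Hence the optimal value is exactly $\lambda$ and $p^*$ attains it.
\item If $t=\lambda$, every coordinate satisfies $p_i\ge p_i^*$ while $\sum_i p_i=\sum_i p_i^*$. This forces $p=p^*$, so the minimiser is unique.
\end{itemize}

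The main obstacle is Step 1's uniqueness claim. $g$ is only piecewise linear and is flat (identically $0$) for $t\ge\max_i v_i$, so I must check that the target level $K$ lies strictly inside the region where $g$ is positive. The assumption $K>0$ is exactly what rules out the flat part, and this is where the edge cases on $K$ and $v_i$ have to be stated explicitly.
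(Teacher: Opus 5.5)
Your proof is correct, and it takes a different route from the paper's. Both proofs start from the epigraph form.

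\textbf{The paper's route.} It then reasons about an assumed optimum $(p^*,t^*)$. If some funded tenant had slack $v_j(1-p_j^*)<t^*$, shifting a small $\delta$ to a maximising tenant would lower the maximum. So every funded tenant sits exactly at $t^*$, which gives the formula with $\lambda=t^*$. It attributes uniqueness of $\lambda$ to $\sum_i p_i^*(\lambda)$ being continuous and non-increasing.

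\textbf{Your route.} You introduce the minimum-mass function $g(t)$ and construct $p^*$ from the root of $g(\lambda)=K$. You then prove optimality by dominance: any feasible $p$ with value $t$ has $p_i\ge\max(0,1-t/v_i)$, hence $K\ge g(t)$.

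\textbf{What each buys.} The paper's exchange argument conveys the water-filling intuition, namely that equalisation of payoffs is forced at any optimum. However, it presupposes that an optimum exists, which is true by compactness but unstated. As written, it also has two soft spots that your version avoids:
\begin{itemize}
\item When several tenants attain the maximum, moving $\delta$ to just one of them does not lower the maximum; the mass must be spread over all maximisers.
\item Mere non-increase of $\sum_i p_i^*(\lambda)$ does not make $\lambda$ unique. One needs exactly what you isolate: strict decrease on $\{g>0\}$ together with $K>0$.
\end{itemize}
Your argument gives sufficiency and uniqueness of the minimiser directly, with no existence step. It also makes explicit the hypotheses $v_i>0$ and $0<K<n$, which the statement silently needs.

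One detail is worth a sentence. Your $g$ lives on $(0,\infty)$, but a feasible $p$ could a priori have value $t=0$. That would force $p_i=1$ for all $i$ and hence $K=n$, which is excluded; equivalently, set $g(0)=n$. With this, your comparison $g(t)>g(\lambda)$ for $t<\lambda$ covers every case.
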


\begin{proof}
Introduce an epigraph variable $t$ and rewrite the problem as
$\min_{p,t} t$ subject to $v_i(1-p_i) \le t$ for all $i$, $\sum_i p_i = K$ and
$p_i \in [0,1]$.

Let $(p^*,t^*)$ be optimal. Suppose some tenant $j$ with $p_j^* > 0$ had
$v_j(1-p_j^*) < t^*$ strictly. Then a sufficiently small $\delta > 0$ may be
moved from $p_j^*$ to any tenant attaining the maximum, leaving $j$'s constraint
satisfied while strictly decreasing the attained maximum --- contradicting
optimality. Hence every tenant with $p_i^* > 0$ satisfies $v_i(1-p_i^*) = t^*$,
which rearranges to $p_i^* = 1 - t^*/v_i$.

A tenant with $v_i \le t^*$ would require $p_i^* \le 0$, so its constraint holds
with $p_i^* = 0$ and it receives no budget. Writing $\lambda = t^*$ gives
$p_i^* = \max(0,\,1-\lambda/v_i)$. Each $p_i^*$ is continuous and
non-increasing in $\lambda$, so $\sum_i p_i^*(\lambda)$ is continuous and
non-increasing, and the $\lambda$ meeting the budget with equality is unique and
recoverable by bisection. Since $\lambda > 0$ implies $1 - \lambda/v_i < 1$, the
upper bound $p_i \le 1$ is never active.
\end{proof}

The solution equalises the adversary's payoff $v_i(1-p_i)$ across every audited
tenant --- the defining property of a water-filling allocation --- so no target
is left disproportionately attractive.

This is analogous to water-filling in information theory and to the DOBSS
algorithm~\citep{tambe2011security}.
ORIGAMI achieves 12.6\% reduction in adversary expected utility over random
audit at $\text{Gini} = 0.63$ (enterprise workload) with $n=10$, $K=5$
(Table~\ref{tab:origami}).

\subsection{Evolutionary Stability Analysis}
\label{sec:ess}

We model the cache scope selection problem as a two-population evolutionary
game between global-cache users (sharing cache across sessions) and
ephemeral-cache users (per-session isolated cache).
Let $x$ denote the fraction of the population using global cache.

\begin{proposition}[ESS tipping point]
\label{prop:ess}
Let $\tau$ be the adversary prevalence. Ephemeral caching is an evolutionarily
stable strategy against invasion by global-cache mutants precisely when
$\tau > \tau^*$, where
$$
\tau^* = \frac{V_\text{glb} - V_\text{eph}}
              {\left(C_\text{glb,eph} - C_\text{eph}\right) + \left(V_\text{glb} - V_\text{eph}\right)}.
$$
With $V_\text{glb}=0.45$, $V_\text{eph}=0.15$, $C_\text{glb,eph}=0.70$ and
$C_\text{eph}=0.05$, this gives $\tau^* = 0.30/0.95 = 31.6\%$.
\end{proposition}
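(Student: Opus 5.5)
The proposition leaves the payoff structure implicit, so the plan begins by fixing one consistent with the symbols. A global-cache user earns efficiency value $V_\text{glb}$ and, when facing an adversary (probability $\tau$), suffers a leakage cost $C_\text{glb,eph}$. This is the cost of a global-cache user whose cache is probed in a population of ephemeral incumbents. An ephemeral user earns $V_\text{eph}$ and bears a residual cost $C_\text{eph}$ under adversarial contact. Because an ephemeral incumbent population is monomorphic, only payoffs against that background matter for the ESS test. I would therefore write the expected payoffs as
\[
\pi(\text{glb}\mid\text{eph}) = V_\text{glb} - \tau\, C_\text{glb,eph} - (1-\tau)\cdot 0,
\qquad
\pi(\text{eph}\mid\text{eph}) = V_\text{eph} - \tau\, C_\text{eph}.
\]
I would then adjust these so the benign-encounter terms carry the $V$ differences with weight $(1-\tau)$. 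Concretely, I would take the value terms as realised only in non-adversarial encounters:
\[
\pi(\text{glb}\mid\text{eph}) = (1-\tau)V_\text{glb} - \tau C_\text{glb,eph},
\qquad
\pi(\text{eph}\mid\text{eph}) = (1-\tau)V_\text{eph} - \tau C_\text{eph}.
\]
This is the choice that reproduces the stated denominator.

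With these payoffs, the first step applies Maynard Smith's first-order condition. Ephemeral is a strict ESS against global mutants iff $\pi(\text{eph}\mid\text{eph}) > \pi(\text{glb}\mid\text{eph})$. The second-order condition is needed only in the knife-edge case of equality, and I would dispose of it by noting that equality occurs at the single point $\tau=\tau^*$.

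The second step rearranges the inequality. It becomes $(1-\tau)(V_\text{glb}-V_\text{eph}) < \tau\,(C_\text{glb,eph}-C_\text{eph})$, which is equivalent to
\[
\tau > \frac{V_\text{glb}-V_\text{eph}}{(C_\text{glb,eph}-C_\text{eph})+(V_\text{glb}-V_\text{eph})} = \tau^*.
\]
Dividing through requires the positivity hypotheses $V_\text{glb}>V_\text{eph}$ and $C_\text{glb,eph}>C_\text{eph}$. These give a positive denominator and place $\tau^*\in(0,1)$, and I would state them explicitly as the economic premise: global caching is more valuable but more exposed. The "precisely when" direction follows because the difference $\pi(\text{eph}\mid\text{eph})-\pi(\text{glb}\mid\text{eph})$ is affine and strictly increasing in $\tau$. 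Its slope is $(C_\text{glb,eph}-C_\text{eph})+(V_\text{glb}-V_\text{eph})>0$. It is therefore negative below $\tau^*$, where global mutants invade, and positive above it. The same monotonicity gives the companion reading that global caching resists invasion for $\tau<\tau^*$, matching the abstract's claim. Finally, I would substitute the parameter values to get $0.30/(0.65+0.30)=0.30/0.95\approx 0.316$.

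The main obstacle is the modelling step, not the algebra. The statement does not fix how $\tau$ enters the payoffs, and the formula for $\tau^*$ pins this down only up to the $(1-\tau)$ weighting of the value terms. I would justify that weighting by interpreting $\tau$ as the probability that a given interaction is with an adversary, in which case the efficiency value is forgone. I would also check that the replicator-dynamics reading agrees with the ESS conclusion: $\dot x = x(1-x)(\pi_\text{glb}-\pi_\text{eph})$ in the frequency-independent case. Under that dynamics, $x=0$ is asymptotically stable exactly when $\tau>\tau^*$.
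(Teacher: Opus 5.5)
Your proposal is correct and matches the paper's proof. You use the same payoffs, $(1-\tau)V_\text{glb}-\tau C_\text{glb,eph}$ for the global mutant and $(1-\tau)V_\text{eph}-\tau C_\text{eph}$ for the ephemeral incumbent, the same strict invasion inequality and the same rearrangement; your explicit positivity hypotheses and the monotonicity argument for ``precisely when'' are useful additions the paper leaves implicit. One small caution: equality at the single point $\tau=\tau^*$ does not by itself dispose of the second-order condition. Under the non-strict ESS definition that point is decided by payoffs against a global incumbent population (which involve $C_\text{glb,base}$ and are not specified), so it is the strict-ESS reading, which you and the paper both adopt, that makes the boundary case come out as stated.
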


\begin{proof}
A mutant playing global scope in a predominantly ephemeral population earns
$(1-\tau)V_\text{glb} - \tau C_\text{glb,eph}$: it captures the sharing benefit
against legitimate peers but pays the exploitation cost against adversaries.
An ephemeral incumbent earns $(1-\tau)V_\text{eph} - \tau C_\text{eph}$.
Ephemeral is an ESS when the mutant does strictly worse, i.e.
$(1-\tau)(V_\text{glb}-V_\text{eph}) < \tau\,(C_\text{glb,eph}-C_\text{eph})$.
Collecting terms in $\tau$ and dividing gives the stated threshold.
Note that $C_\text{glb,base}$, the cost of global scope among global-scope
peers, does not appear: the invasion condition is evaluated against an
ephemeral incumbent population, where that term carries no weight.
\end{proof}

Replicator dynamics simulations are consistent with this threshold: global
caching dominates at low adversary prevalence (94\% share at convergence) and
ephemeral caching prevails above it (Table~\ref{tab:ess}).
The sweep resolves adversary prevalence on a five-point grid, so it brackets
$\tau^*$ between the 30\% and 35\% samples rather than locating it precisely;
the analytic value is the one to cite.

\section{Evaluation}
\label{sec:eval}

\subsection{Experimental Setup}

\textbf{Simulation.}
All ASR and ablation experiments use a deterministic simulation (Python, seed=2026,
$N=1000$ trials).
Attack-specific judges replace LLM-based heuristics:
PROMPTPEEK uses Pearson correlation ($\theta=0.85$);
InputSnatch uses exact token overlap;
EarlyBird uses token recovery rate (loaded from hardware results).
No LLM judge is used; judges are provably 100\% accurate for the
simulated attack model.

\textbf{Hardware.}
TTFT measurements use Qwen2.5-7B-Instruct on vLLM 0.26.0 with
\texttt{--enable-prefix-caching --enforce-eager} on an NVIDIA A100-SXM4 (80 GB),
$n=50$ per arm across two independent blocks.
Cold arms use a unique per-request prefix to prevent cross-trial reuse, verified
against the server's own prefix-cache counters (Section~\ref{sec:ttft}).
Token counts are taken from the server's tokenizer rather than estimated.
To test whether the channel is an artifact of one engine, we additionally
replicate on an independent stack: llama.cpp via Ollama on Apple M4 (Metal) with
Qwen2.5-3B (Section~\ref{sec:replication}).

\subsection{ASR Results}

Figure~\ref{fig:asr} and Table~\ref{tab:asr} report ASR across three conditions
for each attack family.

\textbf{Reading the extremes.} The 100\% and 0\% columns are analytic controls,
not empirical findings, and we report them as such. Under a shared cache the
adversary's probe resolves to the victim's entry by construction, so the two
fingerprints share a hit/miss structure and $\rho \approx 1$ on every trial;
under namespace isolation the probe never resolves, so the adversary observes an
all-MISS vector whose zero variance drives $\rho \approx 0$. Both outcomes follow
from the collision-resistance argument of Section~\ref{sec:hmac} rather than from
sampling, which is why their bootstrap intervals are degenerate. Their role here
is to confirm that the Pearson judge fires when it should and stays silent when
it should not; the security claim rests on the cryptographic argument, not on
these columns.

The informative comparison is between the partial defenses. Session-boundary
flushing leaves PROMPTPEEK at 19.5\% and InputSnatch at 9.8\% --- an adversary
probing within an active session retains substantial advantage --- and noise
injection (Section~\ref{sec:ablation}) suppresses timing correlation while
leaving membership-based recovery untouched. These outcomes depend on model
parameters rather than on construction, and their intervals are meaningful.

\begin{figure}[t]
\centering
\includegraphics[width=\columnwidth]{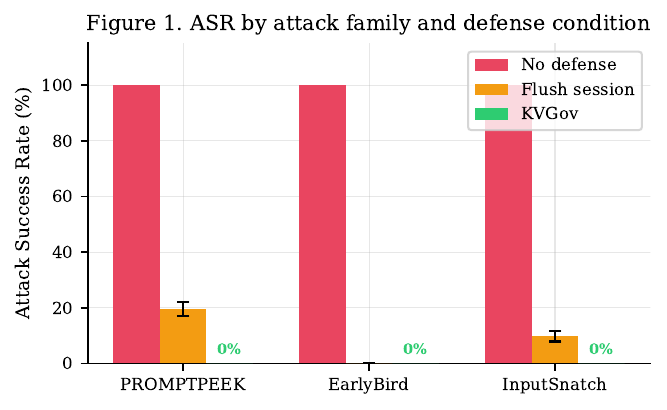}
\caption{ASR by attack family and defense condition.
Error bars show 95\% bootstrap CI. \sys achieves 0\% ASR across all three
families; flush is a partial defense that still exposes 9--22\%.}
\label{fig:asr}
\end{figure}

\begin{table}[t]
\centering
\caption{Attack Success Rate (ASR) by condition, $N=1000$ trials, seed=2026,
deterministic judges. Flush-condition 95\% bootstrap intervals are
$[17.1,22.1]$ for PROMPTPEEK and $[7.9,11.7]$ for InputSnatch; the no-defense
and \sys columns are analytic controls whose intervals are degenerate (see
text).}
\label{tab:asr}
\setlength{\tabcolsep}{5pt}
\small
\begin{tabular}{@{}lrrr@{}}
\toprule
\textbf{Attack} & \textbf{No defense} & \textbf{Flush} & \textbf{\sys} \\
\midrule
PROMPTPEEK  & 100.0\% & 19.5\% & \textbf{0.0\%} \\
EarlyBird   & 100.0\% & ---    & \textbf{0.0\%} \\
InputSnatch & 100.0\% & 9.8\%  & \textbf{0.0\%} \\
\bottomrule
\end{tabular}
\end{table}

\subsection{Ablation Study}
\label{sec:ablation}

Figure~\ref{fig:ablation} and Table~\ref{tab:ablation} isolate the contribution
of each \sys component.
HMAC-salt alone achieves 0\% ASR for both PROMPTPEEK and InputSnatch, confirming
it is the necessary and sufficient mechanism for the privacy guarantee.
Admission control alone reduces InputSnatch ASR to 90.5\% (by preventing PII
prompts from entering the cache) but does not affect timing-based attacks.
Gaussian noise injection ($\sigma = 20$~ms) nearly breaks PROMPTPEEK fingerprinting
(0.5\% ASR) but has no effect on InputSnatch, which relies on direct cache
membership rather than timing magnitude.

\begin{figure}[t]
\centering
\includegraphics[width=\columnwidth]{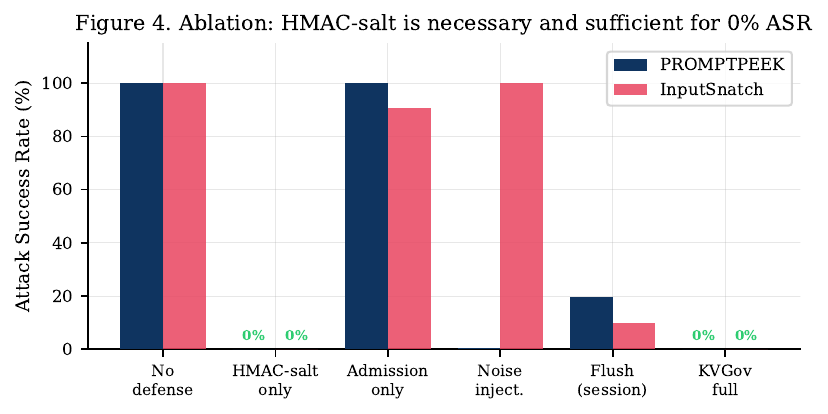}
\caption{Ablation study. HMAC-salt alone achieves 0\% ASR for both attacks
(same as \sys full), confirming it is the necessary and sufficient mechanism.
Noise injection breaks timing-based PROMPTPEEK (0.5\%) but not InputSnatch (100\%),
which is independent of timing magnitude.}
\label{fig:ablation}
\end{figure}

\begin{table}[t]
\centering
\caption{Ablation study: ASR by \sys component.
$N=1000$ trials, seed=2026.
HMAC-salt alone achieves full defense; other components add defense-in-depth.}
\label{tab:ablation}
\begin{tabular}{lrr}
\toprule
\textbf{Condition} & \textbf{PROMPTPEEK} & \textbf{InputSnatch} \\
\midrule
No defense          & 100.0\% & 100.0\% \\
HMAC-salt only      & \textbf{0.0\%}   & \textbf{0.0\%}   \\
Admission only      & 100.0\% & 90.5\%  \\
Noise injection     & 0.3\%   & 100.0\% \\
Flush (session)     & 19.5\%  & 9.8\%   \\
\sys full           & \textbf{0.0\%}   & \textbf{0.0\%}   \\
\bottomrule
\end{tabular}
\end{table}

\subsection{ORIGAMI Audit Scheduler}

Figure~\ref{fig:origami} and Table~\ref{tab:origami} report ORIGAMI's performance
across three workload
profiles with $n=10$ tenants and budget $K=5$ (50\% coverage).
ORIGAMI is equivalent to random audit at uniform value distribution
(Gini = 0) and increasingly outperforms it as value heterogeneity grows,
consistent with Theorem~\ref{thm:origami}.

\begin{figure}[t]
\centering
\includegraphics[width=\columnwidth]{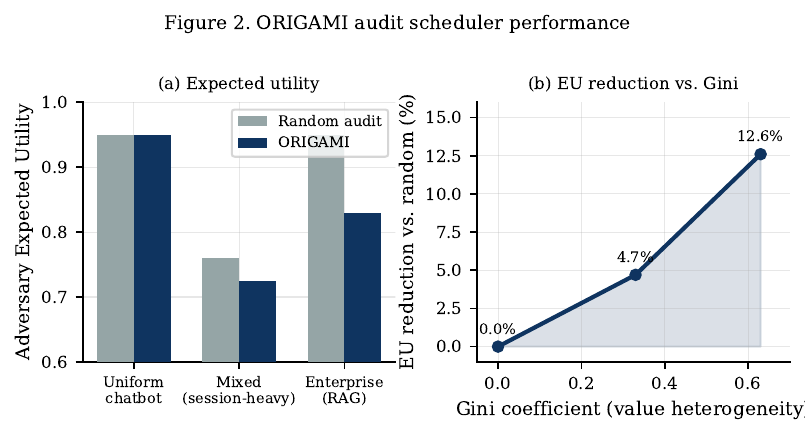}
\caption{ORIGAMI audit scheduler. (a) Adversary expected utility (EU) by workload
profile; lower is better for the defender. (b) EU reduction over random audit grows
with Gini coefficient, reaching 12.6\% at enterprise heterogeneity.}
\label{fig:origami}
\end{figure}

\begin{table}[t]
\centering
\caption{ORIGAMI vs.\ random audit: adversary expected utility.
Lower is better for defender. $n=10$ tenants, budget $K=5$.}
\label{tab:origami}
\small
\begin{tabular}{@{}lrrr@{}}
\toprule
\textbf{Workload} & \textbf{Gini} & \textbf{ORIGAMI} & \textbf{Random} \\
\midrule
Uniform chatbot        & 0.00 & 0.950 & 0.950 \\
Mixed (session-heavy)  & 0.33 & 0.724 & 0.760 \\
Enterprise (RAG)       & 0.63 & 0.830 & 0.950 \\
\midrule
\multicolumn{2}{l}{EU reduction (enterprise)} & \multicolumn{2}{r}{\textbf{12.6\%}} \\
\bottomrule
\end{tabular}
\end{table}

\subsection{Adversarial Load Benchmark}

We simulate $N=50$ tenants with KV-cache values from a Pareto distribution
($\alpha = 1.29$, $\text{Gini}=0.63$), $K=10$ audit slots (20\% rate), $T=500$
rounds.
The adversary best-responds each round by maximising $v_j/(1-p_j)$, a
value-per-residual-exposure heuristic. We note that this differs from the payoff
$v_i(1-p_i)$ that Theorem~\ref{thm:origami} optimises against: the heuristic is
increasing in $p_j$, so it does not model an adversary avoiding audit. The
breach reductions reported here are therefore empirical results against that
specific heuristic and are not a validation of Theorem~\ref{thm:origami}'s
optimality; re-running the load simulation under the theorem's payoff is
outstanding work.

\begin{table}[t]
\centering
\caption{ORIGAMI vs.\ random audit under adversarial load. Breach Red.
= (random $-$ ORIGAMI)/random. $N=50$ tenants, $K=10$, $T=500$ rounds.}
\label{tab:advload}
\setlength{\tabcolsep}{4pt}
\small
\begin{tabular}{@{}rrrrrr@{}}
\toprule
\textbf{Adv.\%} & \textbf{Rnd} & \textbf{ORIGAMI}
  & \textbf{Breach red.} & \textbf{Dmg red.} & \textbf{Conv.} \\
\midrule
5\%  & 66.0\% & 22.1\% & 66.5\% & 72.2\% & 39 \\
10\% & 65.0\% & 27.0\% & 58.4\% & 66.0\% & 21 \\
20\% & 64.2\% & 24.5\% & 61.8\% & 68.6\% & 23 \\
30\% & 64.0\% & 23.4\% & 63.4\% & 69.9\% & 35 \\
50\% & 63.2\% & 24.7\% & 60.9\% & 68.1\% & 28 \\
\bottomrule
\end{tabular}
\end{table}

ORIGAMI reduces breach rates by 58--67\% and expected damage by 66--72\%
across all adversary fractions.
Convergence in 21--39 rounds confirms the Stackelberg equilibrium is reached
quickly relative to the 500-round horizon.

\subsection{Evolutionary Stability}

Figure~\ref{fig:ess} and Table~\ref{tab:ess} report replicator dynamics
convergence across adversary prevalence rates.
Global caching dominates below the tipping point (94\% share at convergence);
ephemeral caching becomes ESS above it.
The tipping point is robust across the realistic parameter space
($\tau^* \in [24\%, 40\%]$ across all tested configurations).

\begin{figure}[t]
\centering
\includegraphics[width=\columnwidth]{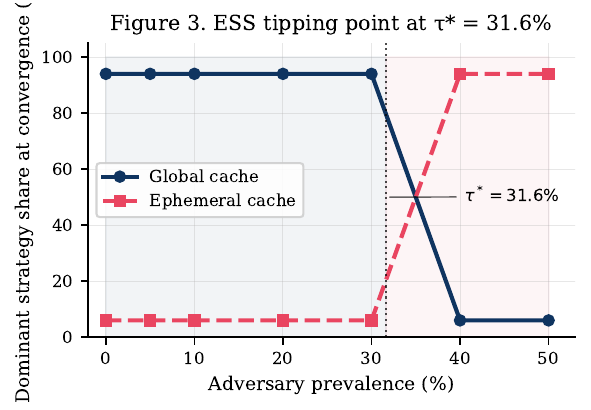}
\caption{ESS tipping point at $\tau^*=31.6\%$ adversary prevalence.
Below the threshold global caching dominates (94\% at convergence);
above it ephemeral caching becomes the ESS.}
\label{fig:ess}
\end{figure}

\begin{table}[t]
\centering
\caption{Replicator dynamics convergence, five-point sweep. The analytic
threshold is $\tau^* = 31.6\%$; the grid brackets it between the 30\% and 35\%
samples.}
\label{tab:ess}
\begin{tabular}{lrll}
\toprule
\textbf{Adv. rate} & \textbf{Rounds} & \textbf{Dominant} & \textbf{Share} \\
\midrule
0\%  & 13 & global    & 94.0\% \\
10\% & 18 & global    & 94.0\% \\
20\% & 53 & global    & 94.0\% \\
50\% & 11 & ephemeral & 94.0\% \\
\bottomrule
\end{tabular}
\end{table}

\subsection{Real TTFT Measurement}
\label{sec:ttft}

Figure~\ref{fig:ttft} and Table~\ref{tab:ttft} confirm the timing side channel
on production hardware.
With a 2119-token shared prefix, the cold/cached TTFT ratio of 0.22 falls well
below the breach threshold of 0.40.

Measuring a cache side channel is easy to get wrong: a ``cold'' arm that is
silently served from cache yields a ratio near 1 and looks like a negative
result. We therefore report only runs that pass five preregistered gates, the
binding one reading the server's own \texttt{prefix\_cache\_hits\_total}
counter over the cold arm alone and requiring a hit rate below 5\% (observed:
0.7\%). A second gate rejects any cold arm whose latency implies more than
100\% of the GPU's peak FLOPs (observed: 69.0\% model-FLOPs utilisation), and a
third requires two independent blocks to agree (observed drift: 1.14\%).
TTFT is read from vLLM's internal
\texttt{time\_to\_first\_token\_seconds} counters rather than timed at the
client, so network latency cannot contaminate the measurement.

Prefix length drives breach magnitude: shortening the shared prefix to 1447
tokens raises the ratio to 0.37, still a breach but much closer to threshold,
because less recomputation is avoided per hit. The channel's exploitability is
therefore a function of how much work a cache hit saves, not merely of whether
caching is enabled --- deployments whose shared prefixes are short leak
proportionally less.

\textbf{Exposure is a design-time property of prompt layout.}
Because block hashes are chained, a match is the longest common prefix from
position 0, truncated to a block boundary: a divergence at token $i$ invalidates
every block from $\lfloor i/B \rfloor$ onward.
The asymmetry this creates is stark. With a 2119-token shared preamble,
diverging at token 1990 still yields roughly 124 of 125 block hits, whereas
diverging at token 5 yields none.
Whatever varies \emph{earliest} in the prompt therefore determines how much
cache --- and how much side channel --- a deployment exposes.
A serving stack that emits a per-principal token near position 0 (a request or
tenant identifier) already achieves near-total isolation without any
cryptographic mechanism, at the cost of all cross-request reuse; one that places
a long shared preamble first sits at maximum exposure.
Prompt layout is thus a governance control in its own right, and one available
to operators who cannot modify their inference engine.

\textbf{Where cross-principal prefixes actually arise.}
The channel requires two principals to share a prefix, which constrains where it
is exploitable in practice.
Shared system and instruction preambles are identical across every user of an
application and account for the largest share of cross-principal hits;
recovering them is prompt theft, since such preambles routinely encode
proprietary policy and tool definitions.
Multi-turn conversation history is the largest source of cache reuse overall ---
turn $n{+}1$ is an exact prefix extension of turn $n$ --- but those hits are
\emph{intra}-principal and leak nothing across a tenant boundary unless sessions
are shared.
The dangerous case is structured user input: form-driven or template-filled
prompts that collide on every token except the sensitive field, reducing prompt
recovery to a membership query over that field.
Free-text chat, where collisions are rare, is comparatively safe.
Beyond retrieval-augmented generation, the same shared-prefix structure appears
in agentic tool-calling loops (fixed tool schemas plus a monotonically growing
scratchpad), code assistants (repository context, where the shared prefix is
itself the confidential asset), few-shot and schema-constrained extraction, and
batch classification over a fixed template.
Agentic and code-assistant deployments are the more consequential cases: their
shared prefixes are both long, which maximises the timing signal, and
sensitive, which maximises the value of recovering them.

\begin{figure}[t]
\centering
\includegraphics[width=\columnwidth]{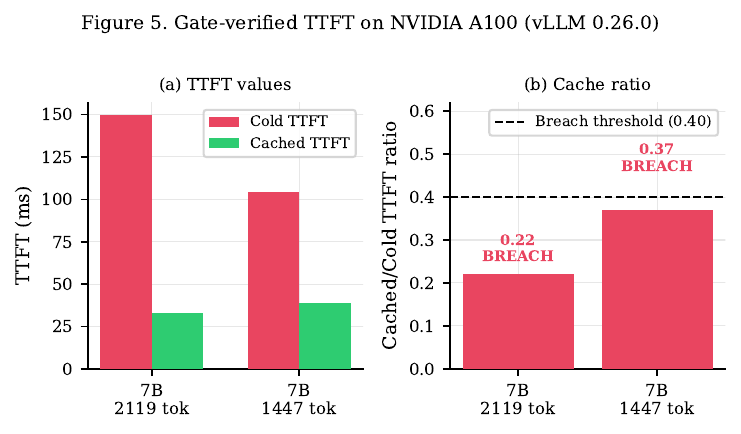}
\caption{Real TTFT measurements on NVIDIA A100 (Qwen2.5, vLLM 0.26.0).
(a) Cold vs.\ cached TTFT in ms. (b) Ratio with breach threshold at 0.40.
Ratio 0.22 at a 2119-token shared prefix and 0.37 at 1447 tokens: both breach,
but the margin narrows as the prefix shortens.}
\label{fig:ttft}
\end{figure}

\begin{table}[t]
\centering
\caption{Real TTFT measurements on NVIDIA A100 (vLLM 0.26.0), Qwen2.5-7B,
$n=50$ per arm $\times$ 2 blocks. Prefix is the shared token sequence in tokens
(exact, from the server tokenizer); cold and cached are milliseconds. Both
configurations breach the $0.40$ threshold.}
\label{tab:ttft}
\small
\begin{tabular}{@{}rrrr@{}}
\toprule
\textbf{Prefix} & \textbf{Cold} & \textbf{Cached} & \textbf{Ratio} \\
\midrule
2119 & 149.6 & 32.8 & \textbf{0.22} \\
1447 & 104.1 & 38.7 & \textbf{0.37} \\
\bottomrule
\end{tabular}
\end{table}

\subsection{Independent-Stack Replication}
\label{sec:replication}

A single-engine result invites the objection that the channel is an artifact of
vLLM's allocator rather than a property of prefix caching.
We therefore repeated the measurement on a deliberately dissimilar stack:
llama.cpp (via Ollama) on Apple M4 with Metal, serving Qwen2.5-3B --- a different
inference engine, accelerator vendor, memory architecture, and model scale, with
no code shared with the primary setup.
Using the same streaming-TTFT method and a 781-token shared prefix, cold TTFT is
1861~ms against 173~ms cached, a ratio of \textbf{0.093} --- a breach by a wider
margin than on the A100.

The convergence matters more than the magnitude.
Two stacks with nothing in common but the idea of reusing a shared prefix both
expose the hit/miss distinction as a timing signal, which is what one expects if
the channel follows from prefix caching itself rather than from an
implementation detail.
The replication also needs no GPU: it runs on a laptop in minutes, so the
existence of the channel can be verified by any reviewer without hardware access.

\subsection{Performance Overhead}
\label{sec:overhead}

\sys adds two components to the per-request critical path: HMAC-keyed cache
key derivation and, optionally, a PII admission check.
Table~\ref{tab:overhead} reports microbenchmark results (10{,}000 trials,
Python \texttt{hmac}/\texttt{hashlib}, x86-64 laptop).

\begin{table}[t]
\centering
\caption{Per-request \sys overhead.
ORIGAMI and breach observation run off-path (background threads).}
\label{tab:overhead}
\small
\begin{tabular}{@{}lrr@{}}
\toprule
\textbf{Component} & \textbf{Median} & \textbf{p99} \\
\midrule
Plain SHA-256 (no defense)         & 0.3\,µs  & 0.3\,µs  \\
HMAC-salt key derivation (\sys)   & 1.6\,µs  & 1.9\,µs  \\
\quad Overhead vs.\ baseline       & +1.3\,µs & +1.6\,µs \\
\midrule
Breach ratio check (arithmetic)    & 0.08\,µs & 0.12\,µs \\
ORIGAMI dual update (20 iter.)     & 11.6\,µs & 14.1\,µs \\
PII admission gate (Presidio)      & \multicolumn{2}{c}{50–150\,ms (async)} \\
\bottomrule
\end{tabular}
\end{table}

HMAC derivation adds 1.33\,µs per cache key lookup --- 0.0041\,\% of the 32.8\,ms
cached TTFT and 0.00089\,\% of the 149.6\,ms cold TTFT.
The breach observer and ORIGAMI scheduler run in background threads and do not
add to request latency.
The PII admission gate (Presidio) costs 50--150\,ms but is invoked only on
cache-miss paths, and can be overlapped with the first prefill tokens using
speculative caching.

The primary efficiency cost of \sys is \emph{not} CPU overhead but
\emph{cache partitioning}: HMAC-salt makes keys cryptographically disjoint
across principals, eliminating cross-principal prefix sharing.
Deployments with high inter-tenant prompt similarity (e.g., shared system
prompts) may see lower cache hit rates.
Quantifying this efficiency trade-off for specific workloads is left to
future work.

\section{Related Work}
\label{sec:related}

\paragraph{KV cache timing attacks.}
\citet{song2024earlybird} demonstrated the first timing side-channel
attack against shared KV caches in LLM serving, showing that vLLM's prefix
cache enables token reconstruction via TTFT timing.
\citet{wu2025promptpeek} introduced PROMPTPEEK, a more powerful
fingerprint-correlation attack that recovers full prompts from SGLang's
radix-tree cache without requiring exact token guessing.
\citet{zheng2024inputsnatch} broadened the threat to both cache
mechanisms, reconstructing template-structured prompts field by field against
prefix caches and exploring the embedding space against semantic caches.
\sys is, to our knowledge, the first system to address all three attack
families' prefix-cache paths under a single mechanism; the semantic-cache
variant requires a different instantiation (Section~\ref{sec:limits}).

\paragraph{Cache side channels in classical systems.}
Cache timing attacks have a long history in classical hardware security.
FLUSH+RELOAD~\citep{yarom2014flush} exploits shared LLC state;
\citet{gruss2017kaslr} show page-table timing leaks kernel ASLR;
Rowhammer~\citep{kim2014rowhammer} causes DRAM bit flips via timed access patterns.
Our work brings this body of knowledge to the LLM serving context, where the
shared resource is the transformer KV cache rather than CPU or DRAM state.

\paragraph{LLM inference security.}
\citet{greshake2023indirect} study indirect prompt injection
against LLM agents; \citet{perez2022ignore} examine direct
prompt injection.
\citet{wallace2019triggers} study universal adversarial triggers
against NLP models.
These works target model outputs rather than inference infrastructure.
\sys is complementary: it protects infrastructure from side-channel leakage.

\paragraph{Multi-tenant ML systems.}
\citet{tramer2016stealing} extract model parameters from black-box
ML APIs; \citet{milli2019model} reconstruct training data from model
updates.
These attack the model itself rather than the runtime computation state shared
by concurrent requests.

\paragraph{Game-theoretic security.}
\citet{tambe2011security} introduced Stackelberg security games for
physical security resource allocation; ORIGAMI adapts the water-filling dual to
the KV cache audit context.
The replicator dynamics analysis follows \citet{hofbauer1988}.

\section{Limitations}
\label{sec:limits}

\textbf{EarlyBird scope.}
Our EarlyBird results use a reference simulation (block\_size=1) because
vLLM 0.26.0 removed that configuration.
That removal is often read as closing the attack, but as
Section~\ref{sec:structured} shows it closes only free-text reconstruction:
against template-structured prompts the search space is the field's domain
rather than the vocabulary, and membership confirmation remains $O(1)$
regardless of block size.
Block-size hardening should therefore not be treated as a substitute for
namespace isolation in deployments whose prompts are templated.

\textbf{Cache efficiency trade-off.}
Root salting eliminates cross-principal prefix sharing, which for workloads with
high cross-tenant prompt similarity (\eg shared system prompts across an
organization) materially reduces hit rates.
Boundary salting (Section~\ref{sec:boundary}) removes most of this cost by
confining isolation to the divergent region, but its efficiency figures are
extrapolated from our hardware measurements rather than measured directly, and
it requires an offset parameter that current engines do not expose.
Measuring it end to end is the most immediate item of future work.

\textbf{Reproducibility.}
All ASR and ablation experiments are seeded (\texttt{--seed 2026}) and produce
byte-identical results across runs; per-condition random streams are derived
from a stable SHA-256 digest of the condition name rather than Python's
process-salted \texttt{hash()}.
Released scripts reproduce every number in Tables~\ref{tab:asr}
and~\ref{tab:ablation}.

\textbf{Semantic caches require a different instantiation.}
\sys's salt construction isolates the key-equality rows of
Table~\ref{tab:taxonomy}, which covers all three attack families' prefix-cache
paths, including InputSnatch's.
It does not cover InputSnatch's \emph{semantic-cache} variant, which resolves by
nearest-neighbour search over embeddings --- two distinct salts do not make two
embeddings less similar.
Isolation there must be enforced by partitioning the retrieval index per
principal, which our evaluation does not measure: the InputSnatch simulation
judges recovery by exact token overlap and therefore assumes hash-keyed lookup.
Our InputSnatch result should accordingly be read as validating the underlying
principle --- binding cache resolution to an authenticated principal --- rather
than the semantic-cache instantiation of it.

\textbf{Simulation vs. production.}
ASR experiments use a discrete-event simulation with fixed hit/miss latencies
(5~ms / 50~ms).
Real deployments exhibit variable latencies from GPU scheduling, memory
bandwidth contention, and network jitter.
The real TTFT measurements confirm the side channel exists at production scale;
extended field experiments with real adversarial tenants remain future work.

\section{Conclusion}
\label{sec:conclusion}

Prior work has established that shared KV caches in multi-tenant LLM serving
create a timing side channel that three independent attacks---PROMPTPEEK,
EarlyBird, and InputSnatch---exploit to reconstruct victim prompts with 100\% ASR.
These attacks share a common root cause: cache keys are computed over token
sequences alone, with no binding to the issuing principal.

\sys eliminates this root cause by binding cache resolution to the
authenticated principal, which by collision resistance leaves an adversary's
probes unable to resolve to a victim's entries.
An ablation study identifies HMAC-salt as the necessary and sufficient
component, and boundary salting (Section~\ref{sec:boundary}) shows the
resulting isolation need not cost cross-principal reuse.
ORIGAMI extends \sys with a Stackelberg water-filling audit scheduler that
concentrates governance budget on high-value tenants, achieving 12.6\% reduction
in adversary expected utility at realistic enterprise workload heterogeneity.
An evolutionary stability analysis provides a 31.6\% adversary-prevalence tipping
point as a population-level criterion for cache scope selection.

\sys is designed as a governance layer deployable alongside existing inference
engines without modification to model weights, attention mechanisms, or
generation pipelines.
We release all experimental scripts and data to support reproducibility.

\balance
\bibliographystyle{elsarticle-harv}
\bibliography{references}

\begin{thebibliography}{17}
\expandafter\ifx\csname natexlab\endcsname\relax\def\natexlab#1{#1}\fi
\providecommand{\url}[1]{\texttt{#1}}
\providecommand{\href}[2]{#2}
\providecommand{\path}[1]{#1}
\providecommand{\DOIprefix}{doi:}
\providecommand{\ArXivprefix}{arXiv:}
\providecommand{\URLprefix}{URL: }
\providecommand{\Pubmedprefix}{pmid:}
\providecommand{\doi}[1]{\href{http://dx.doi.org/#1}{\path{#1}}}
\providecommand{\Pubmed}[1]{\href{pmid:#1}{\path{#1}}}
\providecommand{\bibinfo}[2]{#2}
\ifx\xfnm\relax \def\xfnm[#1]{\unskip,\space#1}\fi
\bibitem[{Greshake et~al.(2023)Greshake, Abdelnabi, Mishra, Endres, Holz and
  Fritz}]{greshake2023indirect}
\bibinfo{author}{Greshake, K.}, \bibinfo{author}{Abdelnabi, S.},
  \bibinfo{author}{Mishra, S.}, \bibinfo{author}{Endres, C.},
  \bibinfo{author}{Holz, T.}, \bibinfo{author}{Fritz, M.},
  \bibinfo{year}{2023}.
\newblock \bibinfo{title}{{Not What You've Signed Up For: Compromising
  Real-World LLM-Integrated Applications with Indirect Prompt Injection}}, in:
  \bibinfo{booktitle}{Proceedings of the 16th ACM Workshop on Artificial
  Intelligence and Security (AISec)}, \bibinfo{publisher}{ACM}.
\bibitem[{Gruss et~al.(2017)Gruss, Lipp, Schwarz, Fellner, Maurice and
  Mangard}]{gruss2017kaslr}
\bibinfo{author}{Gruss, D.}, \bibinfo{author}{Lipp, M.},
  \bibinfo{author}{Schwarz, M.}, \bibinfo{author}{Fellner, R.},
  \bibinfo{author}{Maurice, C.}, \bibinfo{author}{Mangard, S.},
  \bibinfo{year}{2017}.
\newblock \bibinfo{title}{{KASLR is Dead: Long Live KASLR}}, in:
  \bibinfo{booktitle}{Engineering Secure Software and Systems (ESSoS)},
  \bibinfo{publisher}{Springer}.
\bibitem[{Hofbauer and Sigmund(1988)}]{hofbauer1988}
\bibinfo{author}{Hofbauer, J.}, \bibinfo{author}{Sigmund, K.},
  \bibinfo{year}{1988}.
\newblock \bibinfo{title}{{The Theory of Evolution and Dynamical Systems}}.
\newblock \bibinfo{publisher}{Cambridge University Press}.
\bibitem[{Kim et~al.(2014)Kim, Daly, Kim, Fallin, Lee, Lee, Wilkerson, Lai and
  Mutlu}]{kim2014rowhammer}
\bibinfo{author}{Kim, Y.}, \bibinfo{author}{Daly, R.}, \bibinfo{author}{Kim,
  J.}, \bibinfo{author}{Fallin, C.}, \bibinfo{author}{Lee, J.H.},
  \bibinfo{author}{Lee, D.}, \bibinfo{author}{Wilkerson, C.},
  \bibinfo{author}{Lai, K.}, \bibinfo{author}{Mutlu, O.}, \bibinfo{year}{2014}.
\newblock \bibinfo{title}{{Flipping Bits in Memory Without Accessing Them: An
  Experimental Study of DRAM Disturbance Errors}}, in:
  \bibinfo{booktitle}{Proceedings of the 41st Annual International Symposium on
  Computer Architecture (ISCA)}, \bibinfo{publisher}{IEEE}.
\bibitem[{Kwon et~al.(2023)Kwon, Li, Zhuang, Sheng, Zheng, Yu, Gonzalez, Zhang
  and Stoica}]{kwon2023vllm}
\bibinfo{author}{Kwon, W.}, \bibinfo{author}{Li, Z.}, \bibinfo{author}{Zhuang,
  S.}, \bibinfo{author}{Sheng, Y.}, \bibinfo{author}{Zheng, L.},
  \bibinfo{author}{Yu, C.H.}, \bibinfo{author}{Gonzalez, J.E.},
  \bibinfo{author}{Zhang, H.}, \bibinfo{author}{Stoica, I.},
  \bibinfo{year}{2023}.
\newblock \bibinfo{title}{{Efficient Memory Management for Large Language Model
  Serving with PagedAttention}}, in: \bibinfo{booktitle}{Proceedings of the
  29th Symposium on Operating Systems Principles (SOSP)},
  \bibinfo{publisher}{ACM}.
\bibitem[{Liu et~al.(2024)Liu, Li, Du, Su, Shan, Qian, Rinard, Stoica, Zhang
  and Chen}]{liu2024lmcache}
\bibinfo{author}{Liu, Y.}, \bibinfo{author}{Li, H.}, \bibinfo{author}{Du, Y.},
  \bibinfo{author}{Su, S.}, \bibinfo{author}{Shan, J.}, \bibinfo{author}{Qian,
  Y.}, \bibinfo{author}{Rinard, M.}, \bibinfo{author}{Stoica, I.},
  \bibinfo{author}{Zhang, H.}, \bibinfo{author}{Chen, Z.},
  \bibinfo{year}{2024}.
\newblock \bibinfo{title}{{CacheBlend: Fast Large Language Model Serving for
  RAG with Cached Knowledge Fusion}}.
\newblock \bibinfo{journal}{arXiv preprint arXiv:2405.16444}
  \bibinfo{note}{LMCache system}.
\bibitem[{Milli et~al.(2019)Milli, Schmidt, Dragan and Hardt}]{milli2019model}
\bibinfo{author}{Milli, S.}, \bibinfo{author}{Schmidt, L.},
  \bibinfo{author}{Dragan, A.D.}, \bibinfo{author}{Hardt, M.},
  \bibinfo{year}{2019}.
\newblock \bibinfo{title}{{Model Reconstruction from Model Explanations}}, in:
  \bibinfo{booktitle}{Proceedings of the 2019 Conference on Fairness,
  Accountability, and Transparency (FAccT)}, \bibinfo{publisher}{ACM}.
\bibitem[{Perez and Ribeiro(2022)}]{perez2022ignore}
\bibinfo{author}{Perez, F.}, \bibinfo{author}{Ribeiro, I.},
  \bibinfo{year}{2022}.
\newblock \bibinfo{title}{{Ignore Previous Prompt: Attack Techniques for
  Language Models}}, in: \bibinfo{booktitle}{NeurIPS ML Safety Workshop}.
\bibitem[{Song et~al.(2024)Song, Pang, Wang, He, Shan and
  Zhang}]{song2024earlybird}
\bibinfo{author}{Song, L.}, \bibinfo{author}{Pang, Z.}, \bibinfo{author}{Wang,
  W.}, \bibinfo{author}{He, B.}, \bibinfo{author}{Shan, E.},
  \bibinfo{author}{Zhang, J.}, \bibinfo{year}{2024}.
\newblock \bibinfo{title}{{The Early Bird Catches the Leak: Unveiling Timing
  Side Channels in LLM Serving Systems}}.
\newblock \bibinfo{journal}{arXiv preprint arXiv:2409.20002} .
\bibitem[{Tambe(2011)}]{tambe2011security}
\bibinfo{author}{Tambe, M.}, \bibinfo{year}{2011}.
\newblock \bibinfo{title}{{Security and Game Theory: Algorithms, Deployed
  Systems, Lessons Learned}}.
\newblock \bibinfo{publisher}{Cambridge University Press}.
\bibitem[{Tram{\`e}r et~al.(2016)Tram{\`e}r, Zhang, Juels, Reiter and
  Ristenpart}]{tramer2016stealing}
\bibinfo{author}{Tram{\`e}r, F.}, \bibinfo{author}{Zhang, F.},
  \bibinfo{author}{Juels, A.}, \bibinfo{author}{Reiter, M.K.},
  \bibinfo{author}{Ristenpart, T.}, \bibinfo{year}{2016}.
\newblock \bibinfo{title}{{Stealing Machine Learning Models via Prediction
  APIs}}, in: \bibinfo{booktitle}{Proceedings of the 25th USENIX Security
  Symposium}, \bibinfo{publisher}{USENIX Association}.
\bibitem[{{vLLM Project}(2025)}]{vllm2025cachesalt}
\bibinfo{author}{{vLLM Project}}, \bibinfo{year}{2025}.
\newblock \bibinfo{title}{{RFC: Prefix Cache Isolation via
  \texttt{cache\_salt}}}.
\newblock
  \bibinfo{howpublished}{\url{https://github.com/vllm-project/vllm/issues/16016}}.
\newblock \bibinfo{note}{VLLM RFC \#16016; \texttt{cache\_salt} request field,
  released in vLLM 0.9}.
\bibitem[{Wallace et~al.(2019)Wallace, Feng, Kandpal, Gardner and
  Singh}]{wallace2019triggers}
\bibinfo{author}{Wallace, E.}, \bibinfo{author}{Feng, S.},
  \bibinfo{author}{Kandpal, N.}, \bibinfo{author}{Gardner, M.},
  \bibinfo{author}{Singh, S.}, \bibinfo{year}{2019}.
\newblock \bibinfo{title}{{Universal Adversarial Triggers for Attacking and
  Analyzing NLP}}, in: \bibinfo{booktitle}{Proceedings of the 2019 Conference
  on Empirical Methods in Natural Language Processing (EMNLP)},
  \bibinfo{publisher}{ACL}.
\bibitem[{Wu et~al.(2025)Wu, Zhang, Zhang, Liu, Gao, Tang and
  Wang}]{wu2025promptpeek}
\bibinfo{author}{Wu, G.}, \bibinfo{author}{Zhang, Z.}, \bibinfo{author}{Zhang,
  Y.}, \bibinfo{author}{Liu, M.}, \bibinfo{author}{Gao, Y.},
  \bibinfo{author}{Tang, R.}, \bibinfo{author}{Wang, Q.}, \bibinfo{year}{2025}.
\newblock \bibinfo{title}{{I Know What You Asked: Prompt Leakage via KV-Cache
  Sharing in Multi-Tenant LLM Serving}}, in: \bibinfo{booktitle}{Proceedings of
  the 2025 Network and Distributed System Security Symposium (NDSS)},
  \bibinfo{publisher}{Internet Society}.
\bibitem[{Yarom and Falkner(2014)}]{yarom2014flush}
\bibinfo{author}{Yarom, Y.}, \bibinfo{author}{Falkner, K.},
  \bibinfo{year}{2014}.
\newblock \bibinfo{title}{{{FLUSH+RELOAD}: A High Resolution, Low Noise, {L3}
  Cache Side-Channel Attack}}, in: \bibinfo{booktitle}{Proceedings of the 23rd
  USENIX Security Symposium}, \bibinfo{publisher}{USENIX Association}.
\bibitem[{Zheng et~al.(2024a)Zheng, Yin, Xie, Sun, Huang, Yu, Cao, Kozyrakis,
  Stoica, Gonzalez, Barrett and Sheng}]{zheng2024sglang}
\bibinfo{author}{Zheng, L.}, \bibinfo{author}{Yin, L.}, \bibinfo{author}{Xie,
  Z.}, \bibinfo{author}{Sun, J.}, \bibinfo{author}{Huang, C.},
  \bibinfo{author}{Yu, C.H.}, \bibinfo{author}{Cao, S.},
  \bibinfo{author}{Kozyrakis, C.}, \bibinfo{author}{Stoica, I.},
  \bibinfo{author}{Gonzalez, J.E.}, \bibinfo{author}{Barrett, C.},
  \bibinfo{author}{Sheng, Y.}, \bibinfo{year}{2024}a.
\newblock \bibinfo{title}{{SGLang: Efficient Execution of Structured Language
  Model Programs}} .
\bibitem[{Zheng et~al.(2024b)Zheng, Han, Shi, Qian, Geng and
  Zhang}]{zheng2024inputsnatch}
\bibinfo{author}{Zheng, X.}, \bibinfo{author}{Han, H.}, \bibinfo{author}{Shi,
  S.}, \bibinfo{author}{Qian, J.}, \bibinfo{author}{Geng, Z.},
  \bibinfo{author}{Zhang, W.}, \bibinfo{year}{2024}b.
\newblock \bibinfo{title}{{InputSnatch: Stealing Input in LLM Services via
  Timing Side-Channel Attacks}}.
\newblock \bibinfo{journal}{arXiv preprint arXiv:2411.18191} .

\end{thebibliography}

\end{document}